\documentclass[a4paper,fleqn]{article}
\usepackage{fullpage}

\usepackage[utf8]{inputenc}

\usepackage{amsmath}
\usepackage{amssymb}
\usepackage{amsthm}

\newtheorem{lemma}{Lemma}
\newtheorem{theorem}{Theorem}
\newtheorem{corollary}{Corollary}
\newtheorem{example}{Example}

\usepackage{mathtools}
\usepackage{xspace}
\usepackage{tikz}
\usetikzlibrary {arrows.meta,automata,positioning}

\newcommand{\myquot}[1]{``#1''}

\renewcommand{\epsilon}{\varepsilon}
\newcommand{\set}[1]{\{#1\}}
\newcommand{\pow}[1]{2^{#1}}
\newcommand{\nats}{\mathbb{N}}
\newcommand{\size}[1]{|#1|}

\newcommand{\dtmcsymbol}{\mathcal{M}}
\newcommand{\dtmc}{DTMC\xspace}
\newcommand{\ap}{AP}
\newcommand{\initmark}{I}
\newcommand{\path}{\pi}
\newcommand{\pointedpaths}[2]{\Pi(#1,#2)}
\newcommand{\allpaths}[1]{\Pi(#1)}
\newcommand{\pathprefix}{\rho}
\newcommand{\state}[2]{#1(#2)}

\newcommand{\suffix}[2]{#1[#2,\infty)}
\newcommand{\probmes}{\mu}
\newcommand{\cyl}[1]{C_{#1}}

\newcommand{\ltl}{LTL\xspace}
\newcommand{\ctl}{CTL\xspace}
\newcommand{\rctl}{rCTL\xspace}
\newcommand{\rltl}{rLTL\xspace}
\newcommand{\pctl}{PCTL\xspace}
\newcommand{\rpctl}{rPCTL\xspace}
\newcommand{\ctlstar}{CTL*\xspace}
\newcommand{\atlstar}{ATL*\xspace}
\newcommand{\atl}{ATL\xspace}
\newcommand{\pctlstar}{PCTL*\xspace}
\newcommand{\rpctlstar}{rPCTL*\xspace}

\newcommand{\prob}[2]{\mathcal{P}_{#1}(#2)}
\newcommand{\probsymb}{\mathcal{P}}
\renewcommand{\phi}{\varphi}
\newcommand{\cl}{\mathrm{cl}}
\newcommand{\reval}{V}
\newcommand{\boolfour}{\mathbb{B}_4}


\newcommand{\pspace}{\textsc{PSpace}\xspace}
\newcommand{\ptime}{\textsc{PTime}\xspace}


\tikzset{robust/.style={line width=.16ex,line join=round}}

\let\Box\relax
\DeclareMathOperator{\Box}{%
	\text{%
		\tikz[baseline]{%
    			\draw[robust] (0ex,-.1ex) -- (0ex, 1.4ex) -- (1.5ex, 1.4ex) -- (1.5ex, -.1ex) -- cycle;%
		}%
	}%
}


\DeclareMathOperator{\Boxdot}{%
	\text{%
		\tikz[baseline]{%
    			\draw[robust] (0ex, -.1ex) -- (0ex, 1.4ex) -- (1.5ex, 1.4ex) -- (1.5ex, -.1ex) -- cycle;%
	    		\fill (.75ex, .65ex) circle (.15ex);%
    		}%
	}%
}

\let\Diamond\relax
\DeclareMathOperator{\Diamond}{%
	\text{%
		\tikz[baseline]{%
			\draw[robust] (0ex,.6ex) -- (.95ex, 1.55ex) -- (1.9ex, .6ex) -- (.95ex, -.35ex) -- cycle;%
		}%
	}%
}

\DeclareMathOperator{\Diamonddot}{%
	\text{%
		\tikz[baseline]{%
			\draw[robust] (0ex,.6ex) -- (.95ex, 1.55ex) -- (1.9ex, .6ex) -- (.95ex, -.35ex) -- cycle;%
			\fill (.95ex, .6ex) circle (.15ex);%
		}%
	}%
}

\DeclareMathOperator{\Next}{%
	\text{%
		\tikz[baseline]{%
    			\draw[robust] (.75ex, .65ex) circle (.75ex);%
    		}%
	}%
}

\DeclareMathOperator{\Nextdot}{%
	\text{%
		\tikz[baseline]{%
    			\draw[robust] (.75ex, .65ex) circle (.75ex);%
	    		\fill (.75ex, .65ex) circle (.15ex);%
    		}%
	}%
}

\DeclareMathOperator{\Until}{%
	\text{%
		\tikz[baseline]{%
			\node[inner sep=0pt, anchor=base, font=\bfseries] {U};%
		}%
	}%
}

\DeclareMathOperator{\Untildot}{%
	\text{%
		\tikz[baseline]{%
			\node[inner sep=0pt, anchor=base, font=\bfseries] {U};%
			\fill (.1ex, .9ex) circle (.15ex);%
		}%
	}%
}


\DeclareMathOperator{\Releasedot}{%
	\text{%
		\tikz[baseline]{%
			\node[inner sep=0pt, anchor=base, font=\bfseries] {R};%
			\fill (-.025ex, 1.175ex) circle (.15ex);%
		}%
	}%
}



\newcommand{\sat}[2]{\mathrm{Sat}(#1,#2)}

\newcommand{\safety}{\mathrm{Safe}}
\newcommand{\cobuchi}{\mathrm{CoB\ddot{u}chi}}
\newcommand{\buchi}{\mathrm{B\ddot{u}chi}}
\newcommand{\reach}{\mathrm{Reach}}
\newcommand{\untillang}{\mathrm{Until}}


\begin{document}




\title{Robust Probabilistic Temporal Logics}
\author{Martin Zimmermann\\ \small Aalborg University, Aalborg, Denmark.}
\date{\vspace{-.8cm}}
\maketitle
\begin{abstract}
We robustify PCTL and PCTL$^*$, the most important specification languages for probabilistic systems, and show that robustness does not increase the complexity of their model-checking problems.
\end{abstract}

\section{Introduction}

Specifications of reactive systems are typically implications~$\phi_a \rightarrow \phi_g$ where $\phi_a$ is an environment assumption and $\phi_g$ is a system guarantee, both specified in a temporal logic.  
Such a specification is satisfied whenever the assumption is violated, independently of the system's behaviour.
Assume, for example, that both the assumption and the guarantee are  invariants~$\phi_a = \Box \psi_a$ and $\phi_g = \Box \psi_g$ for propositional formulas~$\psi_g$ and $\psi_a$.
Then, the specification~$\Box \psi_a \rightarrow \Box \psi_g$ is satisfied if the formula~$\psi_a$ is violated just once, even if the formula~$\psi_g$ never holds.
Such a behaviour is clearly undesirable, but the classical semantics of temporal logics are not sufficiently robust to deal with violations of the environment assumption.

Considerable effort has been put into overcoming this \myquot{defect} to provide robust semantics for temporal logics. 
However, the notion of robustness is hard to formalize, which is witnessed by the plethora of incomparable notions of robustness in the literature on verification (see, e.g., the introduction of~\cite{DBLP:journals/tocl/AnevlavisPNT22} for a recent overview). 
Here, we further develop an approach due to Tabuada and Neider based on a novel, robust semantics for temporal logics, originally introduced for Linear Temporal Logic (\ltl)~\cite{TabuadaNeider}.
They argue that there are four canonical degrees a formula of the form~$\Box \psi$ can be violated:
\begin{enumerate}
    \item $\psi$ is violated only finitely often.
    \item $\psi$ is violated infinitely often, but also holds infinitely often.
    \item $\psi$ is satisfied only finitely often.
    \item $\psi$ is never satisfied.
\end{enumerate}
Note that there is a natural order between these cases. 
Consequently, their robust semantics uses five truth values, one for satisfaction and four more to capture the four degrees of violation.
Furthermore, Tabuada and Neider defined the semantics of implication such that $\Box \psi_a \rightarrow \Box \psi_g$ is satisfied whenever the degree of violation of the guarantee~$\Box \psi_g$ is not more severe than the violation of the assumption~$\Box \psi_g$.
Thus, the semantics indeed robustly handles violations of environment assumptions.

The resulting logic, called robust \ltl (\rltl), has been extensively studied with very encouraging results: robustness can be added without increasing the complexity of model-checking and synthesis~\cite{DBLP:journals/tocl/AnevlavisPNT22,TabuadaNeider,rltlgames}, robust semantics increases the usefulness of runtime monitoring~\cite{runtime}, and \rltl can even be extended with increased expressiveness or timing constraints, again without an increase in complexity~\cite{DBLP:journals/iandc/NeiderWZ22}.
This approach towards robustness even extends to other temporal logics, e.g., branching-time logics like \ctl and \ctlstar~\cite{rctljournal} and alternating-time logics like \atl and \atlstar~\cite{MNZ23}, where robustness can again be added without increasing the complexity of the most important verification problems.

Beyond the fact that this form of robustness comes for free (in terms of computational complexity), it only changes the semantics of the logics, but not the syntax. 
Furthermore, these logics are also evaluated over classical transition systems with the classical binary satisfaction relation for atomic propositions, i.e., robustness does not emerge from multi-valued semantics of the models (which might be hard to determine), but purely from the semantics. 
These aspects allow for a smooth transition from classical semantics to robust semantics for temporal logics. 
In conclusion, Tabuada and Neider introduced a natural and lightweight approach to add robustness that is applicable to a wide range of logics.

However, these logics capture only robustness in the temporal dimension, i.e., they are concerned with a single execution.
Statements like \myquot{99\% of the executions answer each request eventually} require robustness in terms of the whole set of executions, which is orthogonal to the capabilities of the robust logics studied thus far.
To express such specifications, Hansson and Jonsson introduced probabilistic \ctl (\pctl)~\cite{pctl}, while Aziz, Singhal, and Balarin introduced probabilistic \ctlstar (\pctlstar)~\cite{pctlstar}.
\pctl and \pctlstar replace the existential and universal quantification over paths in \ctl and \ctlstar by the probabilistic operator~$\prob{I}{\Phi}$, where $I \subseteq [0,1]$ is an interval with rational endpoints and $\Phi$ is a property of paths.
Intuitively, $\prob{I}{\Phi}$ is satisfied in a state~$s$ if the probability that a path starting in $s$ satisfies~$\Phi$ is in the interval~$I$.
As \ctl, \pctl requires each temporal operator to be preceded by a $\probsymb$ while \pctlstar (as \ctlstar) allows arbitrary nesting of Boolean connectives, temporal operators, and $\probsymb$.
For example, the property \myquot{99\% of the executions answer each request eventually} is expressed by the \pctlstar formula~$\prob{\ge.99}{\Box (q \rightarrow \Diamond p)}$, where $q$ represents a request and $p$ a response.

In this work, we further the study of robust semantics for temporal logics a la Tabuada and Neider by robustifying \pctl and \pctlstar, obtaining the logics \rpctl and \rpctlstar. 
In line with the design goals of the approach, the robust variants have (essentially) the same syntax as the non-robust variants and are evaluated over the same structures, simplifying the transition from the non-robust to the robust setting. 
The semantics of \rpctl and \rpctlstar also follow the blueprint, i.e., they are five-valued employing the four degrees of violation described above. 
This simplifies the transition from robust semantics for linear, branching, and alternating time to the robust probabilistic setting. 

As our main contribution, we show that this robustification comes again for free: the automata-based model-checking algorithms for \rpctl and \rpctlstar can be generalized to the robust semantics. 
This result is in line with those on the robust temporal logics studied thus far, once more showing the versatility of robustness a la Tabuada and Neider.

\section{Preliminaries}

We denote the set of non-negative integers by $\nats$.
Throughout the paper, we fix a finite set~$\ap$ of atomic propositions we use to label our models and to build our formulas. 
For algorithmic purposes, we assume that all probabilities used in the following are rational.


A discrete-time Markov chain (\dtmc)~$\dtmcsymbol = (S, s_\initmark, \delta, \ell)$ consists of a finite set~$S$ of states containing the initial state~$s_\initmark$, a (stochastic) transition function~$\delta \colon S\times S \rightarrow [0,1]$ satisfying $\sum_{s' \in S}\delta(s,s') = 1$ for all $s \in S$, and a labeling function~$\ell\colon S \rightarrow \pow{\ap}$.
The size~$\size{\dtmcsymbol}$ of $\dtmcsymbol$ is defined as $\sum_{s,s' \in S} \size{\delta(s,s')}$, where $\size{p}$ denotes the length of the binary encoding of $p \in \mathbb{Q}$.

A path of $\dtmcsymbol$ is an infinite sequence~$\path = s_0 s_1 s_2 \cdots \in S^\omega$ such that $\delta(s_n, s_{n+1}) >0$ for all $n \in \nats$.
We say that $\path$ starts in $s_0$. For $n \in \nats$,  we write $\state{\path}{n} = s_n$ for the $n$-th state of $\path$ 
and $\suffix{\path}{n} = s_n s_{n+1}s_{n+2}\cdots$ for the suffix of $\path$ starting at position~$n$. 
We write $\pointedpaths{\dtmcsymbol}{s}$ for the set of all paths of $\dtmcsymbol$ starting in $s \in S$ and define $\allpaths{\dtmcsymbol} = \bigcup_{s \in S} \pointedpaths{\dtmcsymbol}{s}$.

The probability measure~$\probmes_s$ on sets of paths starting in some state~$s \in S$ is defined as usual:
Fix some non-empty path prefix~$\pathprefix =s_0 \cdots s_{n}$. The probability of the cylinder set
$\cyl{\pathprefix} = \set{\path \in \pointedpaths{\dtmcsymbol}{s} \mid \pathprefix \text{ is a prefix of }\path}$
is
\[
\probmes_{s}(\cyl{\pathprefix}) = 
\begin{cases}
    0 &\text{if $s_0 \neq s$,}\\
    \prod\nolimits_{j = 0}^{n-1} \delta(\state{\pathprefix}{j}, \state{\pathprefix}{j+1}) &\text{if $s_0 = s$.} 
\end{cases}
\]
Using Carathéodory's extension theorem, we lift $\probmes_{s}$ to a measure on the $\sigma$-algebra induced by the cylinder sets of path prefixes starting in $s$ (see, e.g.,~\cite[Theorem 1.41]{klenke} for details.
All sets of paths used in the following are $\omega$-regular (see, e.g., \cite[Chapter 1]{gtw} for background on $
\omega$-regular languages) and therefore measurable.

\section{Robust PCTL}
In this section, we robustify \pctl~\cite{pctl}. Following the general design goals of the robustification a la Tabuada and Neider, robust \pctl (\rpctl) and \pctl share the same syntax (but for the dots to distinguish them), i.e., the formulas of \rpctl are given by the grammar
\begin{align*}
\phi  \Coloneqq {}&{} p \mid \neg \phi \mid \phi \wedge \phi \mid \phi \vee \phi \mid \phi \rightarrow \phi  \mid \\
 {}&{} \prob{\sim\lambda}{\Nextdot\phi}\mid\prob{\sim\lambda}{\Diamonddot \phi}\mid \prob{\sim\lambda}{\Boxdot \phi} \mid \prob{\sim\lambda}{\phi \Untildot \phi} \mid \prob{\sim\lambda}{\phi \Releasedot \phi}    
\end{align*}
where $p$ ranges over $\ap$, ${\sim} \in \set{<,\le, =, \ge, >}$, and $\lambda \in [0,1]$ is a rational probability threshold.
The size~$\size{\phi}$ of a formula~$\phi$ is defined as the number of subformulas of $\phi$ plus the maximal length~$\size{\lambda}$ of the binary encodings of the thresholds~$\lambda \in \mathbb{Q}$ appearing in $\phi$.
Note that we have all Boolean operators in the grammar, as the semantics of negation is non-classical and implication cannot be derived from negation and disjunction. This is due to the five-valued robust semantics (see \cite[Section 3.3]{TabuadaNeider} for a detailed discussion).
For didactic reasons, we also prefer to explicitly have the operators eventually~($\Diamonddot$) and always~($\Boxdot$) as they already capture the essence of the robust semantics. The robust semantics of until~($\Untildot$) and release~($\Releasedot$) then generalize these.

Again, following the design goals of the robustification a la Tabuada and Neider, \rpctl is evaluated over the same structures as \pctl, i.e., over discrete-time Markov chains. Let $\dtmcsymbol = (S, s_\initmark, \delta, \ell)$ be a \dtmc. The semantics of \rpctl is defined via an evaluation function~$\reval_\dtmcsymbol$ mapping a vertex~$s$ of $\dtmcsymbol$ and a formula~$\phi$ to a truth value in the (ordered) set~$\boolfour = \set{1111\succ 0111\succ 0011\succ 0001\succ 0000}$.
Given a truth value~$t = b_1 b_2 b_3 b_4 \in\boolfour$, we write $t[k]$ for $b_k$.

The evaluation function is defined inductively via
\begin{itemize}
    
    \item $\reval_\dtmcsymbol(s, p) = \begin{cases} 
    1111 & \text{if } p \in \ell(s),\\
    0000 & \text{if } p \notin \ell(s),
    \end{cases}
    $
    
    \item $\reval_\dtmcsymbol(s, \neg \phi) = \begin{cases} 
    1111 & \text{if } \reval_\dtmcsymbol(s, \phi) \prec 1111,\\
    0000 & \text{if } \reval_\dtmcsymbol(s,\phi) = 1111,
    \end{cases}
    $
    
    \item $\reval_\dtmcsymbol(s, \phi_0 \wedge \phi_1) = \min(\reval_\dtmcsymbol(s, \phi_0), \reval_\dtmcsymbol(s, \phi_1))$,
    
    \item $\reval_\dtmcsymbol(s, \phi_0 \vee \phi_1) = \max(\reval_\dtmcsymbol(s, \phi_0), \reval_\dtmcsymbol(s, \phi_1))$,
    
    \item $\reval_\dtmcsymbol(s, \phi_0 \rightarrow \phi_1) = \begin{cases} 
    1111 & \text{if } \reval_\dtmcsymbol(s, \phi_0) \preceq \reval_\dtmcsymbol(s, \phi_1),\\
    \reval_\dtmcsymbol(s, \phi_1) & \text{if } \reval_\dtmcsymbol(s, \phi_0) \succ \reval_\dtmcsymbol(s, \phi_1),
    \end{cases}
    $
    
    \item $\reval_\dtmcsymbol(s, \prob{\sim\lambda}{\Nextdot\phi}) = b_1b_2b_3b_4 \in \boolfour$ where for all $k \in \set{1,2,3,4}$: $b_k = 1$ iff $\probmes_s(\set{\path \in \pointedpaths{\dtmcsymbol}{s} \mid \reval_\dtmcsymbol(\state{\path}{1},\phi)[k] = 1})\sim\lambda$,
    
    \item $\reval_\dtmcsymbol(s, \prob{\sim \lambda}{\Diamonddot \phi}) = b_1 b_2 b_2 b_4\in \boolfour$ where for all $k \in \set{1,2,3,4}$: $b_k = 1$ iff $\probmes_s(\set{\path \in \pointedpaths{\dtmcsymbol}{s} \mid (\reval_\dtmcsymbol(\state{\path}{n},\phi))[k] = 1  \text{ for some } n\in\nats})\sim \lambda$, and 
    
    \item $\reval_\dtmcsymbol(s, \prob{\sim \lambda}{\Boxdot \phi}) = b_1b_2b_3b_4\in \boolfour$ with 
    \begin{itemize}
        
        \item $b_1 = 1$ iff $\probmes_s(\set{\path \in \pointedpaths{\dtmcsymbol}{s} \mid (\reval_\dtmcsymbol(\state{\path}{n},\phi))[1] = 1  \text{ for all } n\in\nats})\sim \lambda$,
        
        \item $b_2 = 1$ iff $\probmes_s(\set{\path \in \pointedpaths{\dtmcsymbol}{s} \mid (\reval_\dtmcsymbol(\state{\path}{n},\phi))[2] = 1  \text{ for all but finitely many } n\in\nats})\sim \lambda$,
        
        \item $b_3 = 1$ iff $\probmes_s(\set{\path \in \pointedpaths{\dtmcsymbol}{s} \mid (\reval_\dtmcsymbol(\state{\path}{n},\phi))[3] = 1  \text{ for infinitely many } n\in\nats})\sim \lambda$,
        
        \item $b_4 = 1$ iff $\probmes_s(\set{\path \in \pointedpaths{\dtmcsymbol}{s} \mid (\reval_\dtmcsymbol(\state{\path}{n},\phi))[4] = 1  \text{ for some } n\in\nats})\sim \lambda$,

\end{itemize}

    \item $\reval_\dtmcsymbol(s, \prob{\sim \lambda}{\phi \Untildot \psi}) = b_1 b_2 b_2 b_4\in \boolfour$ where for all $k \in \set{1,2,3,4}$: $b_k = 1$ iff $\probmes_s(\set{\path \in \pointedpaths{\dtmcsymbol}{s} \mid \text{ there exists } n\in\nats \text{ s.t. } (\reval_\dtmcsymbol(\state{\path}{n},\psi))[k] = 1 \text{ and } (\reval_\dtmcsymbol(\state{\path}{n'},\phi))[k] = 1  \text{ for all } n' < n }) \sim \lambda$, and 
    
    \item $\reval_\dtmcsymbol(s, \prob{\sim \lambda}{\phi\Releasedot \psi}) = b_1b_2b_3b_4 \in\boolfour$ with               
    \begin{itemize}
        
        \item $b_1 = 1$ iff $\probmes_s(\set{\path \in \pointedpaths{\dtmcsymbol}{s} \mid \text{for all } n \in \nats~ (\reval_\dtmcsymbol(\state{\path}{n},\psi))[1] = 1  \text{ or } (\reval_\dtmcsymbol(\state{\path}{n},\phi))[1] = 1 \text{ some } n' < n})\sim \lambda$,
        
        \item $b_2 = 1$ iff $\probmes_s(\set{\path \in \pointedpaths{\dtmcsymbol}{s} \mid (\reval_\dtmcsymbol(\state{\path}{n},\psi))[2] = 1  \text{ for all but finitely many } n\in\nats \text{ or } (\reval_\dtmcsymbol(\state{\path}{n},\phi))[2] = 1 \text{ some } n\in \nats })\sim \lambda$,
        
        \item $b_3 = 1$ iff $\probmes_s(\set{\path \in \pointedpaths{\dtmcsymbol}{s} \mid (\reval_\dtmcsymbol(\state{\path}{n},\psi))[3] = 1  \text{ for infinitely many } n\in\nats \text{ or } (\reval_\dtmcsymbol(\state{\path}{n},\phi))[3] = 1 \text{ some } n' < n})\sim \lambda$, and
        
        \item $b_4 = 1$ iff $\probmes_s(\set{\path \in \pointedpaths{\dtmcsymbol}{s} \mid (\reval_\dtmcsymbol(\state{\path}{n},\psi))[4] = 1  \text{ for some } n\in\nats \text{ or } (\reval_\dtmcsymbol(\state{\path}{n},\phi))[4] = 1 \text{ some } n' < n})\sim \lambda$.
        
    \end{itemize}

\end{itemize}

Here, the cases for Boolean connectives and temporal operators follow the blueprint introduced by Tabuada and Neider for \rltl (which also have been used for the robust variants of \ctl, \ctlstar, \atl, and \atlstar). For a detailed motivation and description, we refer to~\cite[Section 3]{TabuadaNeider}. On the other hand, the semantics of $\probsymb$ generalizes the classical two-valued semantics of \pctl to five truth values, just as the path quantifiers in robust \ctl~\cite{rctljournal} generalizes the path quantifiers of \ctl and the strategy quantifier of robust \atl~\cite{MNZ23} generalizes the strategy quantifier of \atl.

\begin{example}
\label{example:rpctl}
Consider the formula~$\phi= \prob{\ge .9}{\Boxdot a} \rightarrow \prob{\ge .95}{\Boxdot g}$ expressing a robust assume-guarantee property. Assume $\phi$ evaluates to $1111$ and consider the following cases:
\begin{itemize}
    \item Assume $\prob{\ge .9}{\Boxdot a}$ evaluates to $1111$, i.e., with probability $\ge .9$, $a$ holds at every position of a path. Then, by the semantics of the implication, with probability $\ge .95$, $g$ holds at every position. 
    \item Assume $\prob{\ge .9}{\Boxdot a}$ evaluates to $0111$, i.e., with probability $\ge .9$, $a$ holds at all but finitely many positions of a path (but not at \emph{every} position of a path with probability $\ge .9$). Then, by the semantics of the implication, with probability $\ge .95$, $g$ holds at least at all but finitely many positions.
    \item Similar arguments hold for the truth values~$0011$ and $0001$: Assume with probability~$\ge .9$, $a$ holds infinitely often ($a$ holds at least once). Then, with probability~$\ge .95$, $g$ holds infinitely often) ($g$ holds at least once).
\end{itemize}
Thus, the semantics of $\phi$ ensures that a violation of the assumption $\Boxdot a$ is met with (at most) a proportional violation of the guarantee~$\Boxdot g$.

But we can even derive useful information if $\phi$ does not evaluate to $1111$.
Assume, $\phi$ evaluates to $t \prec 1111$. 
This can only be the case if the assumption~$\prob{\ge .9}{\Boxdot a}$ evaluates to some truth value strictly smaller than $t$ and the guarantee~$\prob{\ge .95}{\Boxdot g}$ evaluates to $t$.
Hence, even if the implication does not hold, it still yields the degree of satisfaction of the guarantee.
\end{example}

The above example show that the robust semantics does indeed capture the intuition described in the introduction. 

\subsection{Expressiveness}
\label{subsec_exp}

In this section, we discuss the expressiveness of \rpctl; in particular, we compare it to the expressiveness of \pctl.

Our first result shows that \rpctl is at least as expressive as \pctl. It follows directly from the design goals of the robust semantics: they are defined such the first bit represents standard (non-robust) semantics. 

Note that the restriction to implication-free formulas is just technical, as implications~$\phi\rightarrow \psi$ in \pctl formulas can always be rewritten as $\neg\phi \vee \psi$.
The need for the implication-removal stems from the fact that robust implication does not generalize classical implication~\cite[Footnote~3]{runtime}.

\begin{lemma}
\label{lemma:pctl2rpctl}
Let $\phi$ be a \pctl formula without implications, and let $\dtmcsymbol$ be a \dtmc with initial state~$s_\initmark$. Then, $\dtmcsymbol,s_\initmark \models \phi$ iff $\reval_\dtmcsymbol(s_\initmark, \dot{\phi}) = 1111$, where $\dot{\phi}$ is the \rpctl formula obtained from $\phi$ by dotting all temporal operators. 
\end{lemma}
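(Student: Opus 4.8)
The plan is to prove, by structural induction on $\phi$, the a priori stronger statement that $\reval_\dtmcsymbol(s,\dot\phi) = 1111$ if and only if $\dtmcsymbol, s \models \phi$ for \emph{every} state~$s$ of $\dtmcsymbol$, not only for $s_\initmark$; this strengthening is forced on us because the probabilistic operators evaluate their argument at states reached along paths. As dotting is a homomorphism on the formula structure, the inductive cases line up with the grammar, and since $\phi$ is implication-free we never have to treat~$\to$. The single observation underlying every case is that all truth values lie in $\boolfour$, and that $1111$ is the unique element of $\boolfour$ whose first component is~$1$. Consequently, for any subformula~$\psi$ and any state~$t$, we have $\reval_\dtmcsymbol(t,\dot\psi)[1] = 1$ iff $\reval_\dtmcsymbol(t,\dot\psi) = 1111$, which by the induction hypothesis is in turn equivalent to $\dtmcsymbol, t \models \psi$.

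The atomic and Boolean cases are routine and only exploit that $\boolfour$ is totally ordered with greatest element~$1111$. For conjunction and disjunction, $\min$ and $\max$ equal $1111$ exactly when both, respectively at least one, of the arguments equal~$1111$, which by the induction hypothesis reproduces the classical clauses; for negation, $\reval_\dtmcsymbol(s,\neg\dot\phi) = 1111$ holds iff $\reval_\dtmcsymbol(s,\dot\phi) \neq 1111$, i.e. iff $\dtmcsymbol,s\not\models\phi$.

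The heart of the argument is the three probabilistic operators, where the first-bit observation does all the work. Take $\prob{\sim\lambda}{\Boxdot\dot\psi}$ and let $b_1b_2b_3b_4$ be its value. By definition, $b_1 = 1$ iff $\probmes_s(\set{\path \in \pointedpaths{\dtmcsymbol}{s} \mid \reval_\dtmcsymbol(\state{\path}{n},\dot\psi)[1] = 1 \text{ for all } n \in \nats}) \sim \lambda$. Applying the first-bit observation at each state visited by the path turns the event inside the measure into $\set{\path \mid \state{\path}{n} \models \psi \text{ for all } n}$, which is precisely the event governing the classical semantics of $\prob{\sim\lambda}{\Box\psi}$; hence $b_1 = 1$ iff $\dtmcsymbol, s \models \prob{\sim\lambda}{\Box\psi}$. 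Because the output itself lies in $\boolfour$, we obtain $\reval_\dtmcsymbol(s, \prob{\sim\lambda}{\Boxdot\dot\psi}) = 1111$ iff $b_1 = 1$, closing the case. The operators $\Diamonddot$ and $\Nextdot$ are handled identically, with ``for all~$n$'' replaced by ``for some~$n$'' and by ``at position~$1$'', so that their first-bit events likewise coincide with the classical eventually, respectively next, events.

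I expect the probabilistic operators to be the main obstacle, and within them the delicate point is the passage from the four-valued clause to its first component. Two things must be handled with care. First, rewriting the classical satisfaction set as the robust first-bit event must be a genuine set equality, which rests on the equivalence $\reval_\dtmcsymbol(t,\dot\psi)[1]=1 \Leftrightarrow \dtmcsymbol,t\models\psi$ holding simultaneously at every state along a path. Second, inferring $\reval_\dtmcsymbol(s,\cdot) = 1111$ from $b_1 = 1$ is legitimate only because every truth value is guaranteed to lie in $\boolfour$: the components $b_2,b_3,b_4$ are never argued individually but are forced to~$1$ by this membership. Taking the codomain~$\boolfour$ of the semantics as given, the remainder is bookkeeping.
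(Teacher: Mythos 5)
Your write-up follows the same route as the paper's proof, which consists solely of the sentence \myquot{by induction over the construction of $\phi$}: a structural induction whose only non-routine cases are the three probabilistic operators, resolved by the observation that $1111$ is the unique element of $\boolfour$ whose first component is $1$, so that the first-bit clause of each probabilistic operator coincides with the classical satisfaction condition. The Boolean cases and the direction $\reval_\dtmcsymbol(s,\dot\phi)=1111 \Rightarrow \dtmcsymbol,s\models\phi$ are fine as you present them.

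However, the one step you explicitly defer --- \myquot{taking the codomain $\boolfour$ of the semantics as given} --- is not bookkeeping; it is precisely where the argument can fail, and you need it for the converse direction. For $\prob{\sim\lambda}{\Boxdot\dot\psi}$ the four bits are defined by independent conditions $\probmes_s(E_k)\sim\lambda$ on events $E_1\subseteq E_2\subseteq E_3\subseteq E_4$ (nested provided the values of $\dot\psi$ at states lie in $\boolfour$, which itself needs the same argument one level down). Monotonicity of the measure gives $\probmes_s(E_1)\le\dots\le\probmes_s(E_4)$, and this forces $b_1\le b_2\le b_3\le b_4$ --- hence membership of $b_1b_2b_3b_4$ in $\boolfour$ --- only when the set $\set{x \mid x\sim\lambda}$ is upward closed, i.e.\ for ${\sim}\in\set{\ge,>}$. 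For ${\sim}\in\set{<,\le,=}$ the bit vector can leave $\boolfour$ and the inference \myquot{$b_1=1$ implies the value is $1111$} breaks. Concretely, take states $u,v$ with $\delta(u,u)=\delta(u,v)=1/2$, $\delta(v,v)=1$, and $p\in\ell(v)\setminus\ell(u)$: the safety event for $p$ from $u$ has probability $0$ while the co-B\"uchi, B\"uchi, and reachability events have probability $1$, so $\reval_\dtmcsymbol(u,\prob{=0}{\Boxdot p})=1000$, although $u$ classically satisfies $\prob{=0}{\Box p}$. The same issue affects $\Nextdot$ and $\Diamonddot$. So the closure property must either be proved (which will reveal the restriction to ${\sim}\in\set{\ge,>}$) or imposed as a hypothesis; as written, your probabilistic cases rest on an assumption that is false for three of the five comparison operators the syntax admits. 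To be fair, the paper's one-line proof does not address this either, but since your argument makes the dependence explicit, it should not then wave it through.
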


\begin{proof}
By induction over the construction of $\phi$, formalizing the fact that the first bit of the robust semantics captures the classical semantics of \pctl. 
This can be seen by a careful inspection of the robust semantics.
\end{proof}

\begin{corollary}
\rpctl is at least as expressive as \pctl.    
\end{corollary}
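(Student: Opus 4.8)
The plan is to proceed by structural induction on $\phi$, as the authors suggest, organised around one observation that makes every case routine. Since $\reval_\dtmcsymbol$ takes values in the chain~$\boolfour$, and $1111$ is the only element of $\boolfour$ whose first bit equals~$1$, we have $\reval_\dtmcsymbol(s,\dot\phi) = 1111$ if and only if $\reval_\dtmcsymbol(s,\dot\phi)[1] = 1$. The statement therefore reduces to showing that the \emph{first bit} of $\reval_\dtmcsymbol(s,\dot\phi)$ tracks classical \pctl satisfaction, i.e.\ $\dtmcsymbol,s \models \phi$ if and only if $\reval_\dtmcsymbol(s,\dot\phi)[1] = 1$, and I would run the induction in this reformulated shape.

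First I would dispatch the base case and the Boolean connectives. For $p \in \ap$ the two semantics agree by definition. For $\neg\phi$, robust negation sends every value $\prec 1111$ to $1111$ and $1111$ to $0000$, so $\reval_\dtmcsymbol(s,\neg\phi) = 1111$ iff $\reval_\dtmcsymbol(s,\phi) \neq 1111$, which by induction is iff $\dtmcsymbol,s \not\models \phi$. For $\phi_0 \wedge \phi_1$ and $\phi_0 \vee \phi_1$, the operations $\min$ and $\max$ on the chain reach the top value iff both, respectively at least one, of the arguments equal $1111$; the induction hypothesis matches this with classical conjunction and disjunction. Implication does not occur by hypothesis.

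The crux is the three probabilistic operators, and here I would exploit that the \emph{first} defining bit of each operator is phrased purely in terms of the first bit of the sub-formula along a path. Concretely, for $\prob{\sim\lambda}{\Boxdot\dot\phi}$ the bit $b_1$ thresholds $\probmes_s$ of the event $\set{\path \mid \reval_\dtmcsymbol(\state{\path}{n},\dot\phi)[1]=1 \text{ for all } n}$; by the induction hypothesis $\reval_\dtmcsymbol(\state{\path}{n},\dot\phi)[1]=1$ is equivalent to $\dtmcsymbol,\state{\path}{n}\models\phi$, so this event is exactly the classical path set of $\Box\phi$. The same identification yields, for $\prob{\sim\lambda}{\Diamonddot\dot\phi}$, the classical path set of $\Diamond\phi$ (some position), and for $\prob{\sim\lambda}{\Nextdot\dot\phi}$ the set $\set{\path \mid \dtmcsymbol,\state{\path}{1}\models\phi}$. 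In each case $b_1 = 1$ iff the resulting classical probability relates to $\lambda$ via $\sim$, which is precisely classical \pctl satisfaction of the corresponding formula; together with the reduction above this closes the induction.

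The main obstacle is the reduction step itself, namely the passage from \myquot{first bit is~$1$} to \myquot{value equals~$1111$}, which silently uses that $\reval_\dtmcsymbol$ never leaves $\boolfour$. I expect the always-operator to be the delicate point, because its four bits track genuinely different modalities of the sub-formula's bits (holding always, all but finitely often, infinitely often, and at least once), so the four defining events form an increasing chain under $\probmes_s$; one must check that thresholding this nested chain against $\lambda$ keeps the resulting value on the chain~$\boolfour$, so that attaining $1111$ is governed by the first bit alone. I would therefore establish, or invoke, well-definedness of $\reval_\dtmcsymbol$ into $\boolfour$ before running the argument, since the equivalence with the classical semantics hinges on it.
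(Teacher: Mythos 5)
Your overall strategy is the paper's: the corollary is an immediate consequence of Lemma~\ref{lemma:pctl2rpctl}, which is proved by induction over the construction of $\phi$, and your first-bit reformulation is a sensible way to make that induction close (a naive induction on \myquot{value equals $1111$} would not give you the backward direction of the probabilistic cases, since classical satisfaction only pins down $b_1$). You also correctly isolate the one point on which everything hinges: that $\reval_\dtmcsymbol$ actually lands in $\boolfour$, so that \myquot{first bit is $1$} is equivalent to \myquot{value is $1111$}.

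The gap is that your proposed way of discharging this --- the four defining events are nested, their probabilities are therefore monotone, and thresholding a monotone chain keeps the value in $\boolfour$ --- only works when $\sim$ is $\ge$ or $>$. The syntax allows ${\sim} \in \set{<,\le,=}$ as well, and there the monotonicity works against you. Concretely, take a \dtmc in which $p$ labels the initial state~$s$, and $s$ moves with probability~$0.3$ to a $p$-labelled sink and with probability~$0.7$ to an unlabelled sink: then $\probmes_s(\safety(s,\sat{p}{1111})) = 0.3$ while $\probmes_s(\reach(s,\sat{p}{1111})) = 1$, so $\prob{\le 0.5}{\Boxdot p}$ gets $b_1 = 1$ (as $0.3 \le 0.5$) but $b_4 = 0$ (as $1 \not\le 0.5$), a bit vector outside $\boolfour$ whose first bit is $1$ although it is not $1111$. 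At such a subformula your reduction from \myquot{$=1111$} to \myquot{first bit is $1$} fails, and so does the negation case of your induction, which needs \myquot{$\neq 1111$} to coincide with \myquot{first bit is $0$}. So well-definedness cannot simply be \myquot{invoked}: either it must be proved by an argument that also covers the upper-bound and equality comparisons (which the nesting argument does not supply), or the induction must be carried out on the literal statement $\reval_\dtmcsymbol(s,\dot\phi)=1111$ with all four bits analysed per operator --- and for non-monotone $\sim$ that analysis is precisely where the difficulty sits.
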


Let us briefly discuss the other inclusion, e.g., is \rpctl strictly more expressive than \pctl?
This is true for the non-probabilistic setting,  where \rctl (robust \ctl) is strictly more expressive than \ctl~\cite{rctljournal}, as $\reval_\dtmcsymbol(s_\initmark,\forall\Boxdot{p}) \succeq 0111$ holds iff $p$ holds at all but finitely many positions of every path starting in $s$. 
This property cannot be expressed in \ctl~\cite[Theorem 6.21]{bk}. 
However, the analogous property \myquot{$p$ holds at all but finitely many positions often with probability one} can be expressed in \pctl~\cite[Theorem 10.48]{bk} (when considering finite {\dtmc}s), relying on the fact that a path ends up  with probability one in a bottom strongly-connected component.
We leave open the question whether similar arguments are sufficient to show that \rpctl can be embedded into \pctl (w.r.t.\ finite {\dtmc}s).

Let us conclude this section with a consequence of the embedding proven in Lemma~\ref{lemma:pctl2rpctl}. \rpctl satisfiability asks, given a formula~$\phi$ and a truth value~$t^*$ whether there is a \dtmc~$\dtmcsymbol$ with initial state~$s_\initmark$ such that~$\reval_\dtmcsymbol(s_\initmark, \phi) \succeq t^*$.
\pctl satisfiability has recently been shown to be undecidable~\cite{pctlsatundec}.
So, due to Lemma~\ref{lemma:pctl2rpctl}, which allows us to embed \pctl in \rpctl, \rpctl satisfiability is also undecidable.

\subsection{Model-checking}

In this section, we prove that model-checking \rpctl is not harder than model-checking \pctl, which is in \ptime~\cite{pctl}, i.e., robustness can be added for free.
Formally, \rpctl model-checking is the following problem: Given a \dtmc~$\dtmcsymbol$ with initial state~$s_\initmark$, an \rpctl formula~$\phi$, and a truth value~$t^* \in \boolfour$, is $\reval_\dtmcsymbol(s_\initmark, \phi) \succeq t^*$?

In the following, we prove that \rpctl model-checking is not harder than \pctl model-checking by combining techniques developed for robustified temporal logics with a generalization of an automata-based model-checking algorithm for \pctl. 

\begin{theorem}
\label{theorem:rpctlmodelchecking}
\rpctl model-checking is in \ptime.
\end{theorem}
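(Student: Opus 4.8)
The plan is to compute the evaluation function $\reval_\dtmcsymbol(s,\phi)$ for every state $s$ and every subformula $\phi$ by a bottom-up induction on the structure of $\phi$, mimicking the standard \pctl model-checking algorithm but tracking a four-bit truth value at each state rather than a single boolean. Since a truth value in $\boolfour$ is just a monotone bit vector, I would maintain, for each subformula $\phi$ and each bit position $k \in \{1,2,3,4\}$, the set of states whose $k$-th bit is $1$; by the ordering of $\boolfour$ this data determines $\reval_\dtmcsymbol(s,\phi)$ completely. The number of subformulas is linear in $\size{\phi}$ and the number of states is $\size{\dtmcsymbol}$, so it suffices to show each inductive step runs in polynomial time.

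The Boolean cases ($p$, $\neg\phi$, $\phi_0\wedge\phi_1$, $\phi_0\vee\phi_1$, $\phi_0\rightarrow\phi_1$) are immediate: each is a fixed, pointwise function of the truth values already computed at $s$, evaluable in constant time per state. The interesting cases are the three probabilistic operators. The key observation is that, for a fixed bit position $k$, the inner condition ``$\reval_\dtmcsymbol(\state{\path}{n},\phi)[k]=1$'' depends only on whether the current state lies in the already-computed set $T_k^\phi = \set{s \mid \reval_\dtmcsymbol(s,\phi)[k]=1}$. Thus each bit of $\prob{\sim\lambda}{\Nextdot\phi}$, $\prob{\sim\lambda}{\Diamonddot\phi}$, $\prob{\sim\lambda}{\Boxdot\phi}$ reduces to computing, for every state $s$, the probability of a single $\omega$-regular path event over the target set $T_k^\phi$ and comparing it to $\lambda$. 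These are exactly the four canonical events from the introduction, and each is computable in \ptime by the standard \dtmc machinery:
\begin{itemize}
    \item the next-event probability $\probmes_s$ is just $\sum_{s' \in T_k^\phi}\delta(s,s')$;
    \item the reachability (``some $n$'') probabilities solve a linear equation system over $S$;
    \item the ``for all $n$'' (safety, $b_1$) probabilities are reachability probabilities in the complement, i.e.\ safety within $T_k^\phi$;
    \item the ``infinitely often'' ($b_3$) and ``all but finitely often'' ($b_4$, really ``eventually always'') events are Büchi/co-Büchi conditions, whose probabilities are obtained by a bottom-strongly-connected-component analysis followed by a reachability computation, all in \ptime.
\end{itemize}
For each of these we then set the corresponding bit at $s$ according to the comparison $\sim\lambda$ and assemble the four bits into the truth value.

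The one subtlety I would flag as the main thing to get right (rather than a genuine obstacle) is that the four bits of a probabilistic operator are computed \emph{independently}, one $\omega$-regular query per bit, and only then combined; one must check that the resulting bit vector is indeed monotone and so lands in $\boolfour$ rather than being an arbitrary element of $\set{0,1}^4$. This monotonicity should follow from the inclusions among the four canonical path events (``for all $n$'' $\subseteq$ ``all but finitely many $n$'' $\subseteq$ ``infinitely many $n$'' $\subseteq$ ``some $n$'') together with the already-established monotonicity of the inner truth values $\reval_\dtmcsymbol(\cdot,\phi)$, propagated through the induction; care is needed only when $\sim$ is $=$, which I would handle by noting $\reval$ nonetheless stays well-defined as a bit vector and the comparison is applied bitwise. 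Summing the per-subformula, per-bit \ptime costs over the linearly many subformulas and constantly many bits yields an overall polynomial-time procedure, giving the claimed \ptime membership.
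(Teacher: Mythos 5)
Your proof is correct and follows essentially the same route as the paper: a bottom-up computation of, for each subformula, the set of states attaining each truth value (your per-bit sets $T_k^\phi$ are in bijection with the paper's satisfaction sets $\sat{\psi}{t}$), with the probabilistic operators reduced per bit to next/reachability/safety/Büchi/co-Büchi probability queries solvable in polynomial time by standard \dtmc algorithms. Your remark on checking that the four independently computed bits assemble into a monotone vector (especially for ${\sim}$ being $=$) is a point the paper's proof passes over silently, but it does not change the algorithm or its complexity.
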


\begin{proof}
Fix a \dtmc~$\dtmcsymbol = (S, s_\initmark, \delta, \ell)$ and an \rpctl formula~$\phi$, and let $\cl(\phi)$ denote the set of subformulas of $\phi$ (which is defined as expected).
We show how to inductively compute the satisfaction sets
\[\sat{\psi}{t} = \set{s \in S \mid \reval_\dtmcsymbol(s, \psi) \succeq t}\]
for $\psi \in \cl(\phi)$ and $t \in \boolfour$.
Note that $\sat{\psi}{0000} = S$ holds for all subformulas~$\psi$.
Hence, in the following, we only consider $t \succ 0000$.
Also, the cases for atomic propositions and Boolean connectives are trivial, as they amount to Boolean combinations of already computed sets (see, e.g.,~\cite{rctljournal}). 
For example, we have 
$\sat{\psi' \wedge \psi''}{t} = \sat{\psi'}{t} \cap \sat{\psi''}{t}$ and
$\sat{\psi' \vee \psi''}{t} = \sat{\psi'}{t} \cup \sat{\psi''}{t}$.
Hence, it only remains to consider subformulas~$\psi$ of the form~$\prob{\sim \lambda}{\Nextdot\psi'}$, $\prob{\sim \lambda}{\Diamonddot\psi'}$, $\prob{\sim \lambda}{\Boxdot\psi'}$, $\prob{\sim \lambda}{\psi'\Untildot\psi''}$, or $\prob{\sim \lambda}{\psi'\Releasedot\psi''}$.

We begin with the next operator.
Here, we have $s \in \sat{\prob{\sim \lambda}{\Nextdot\psi'}}{t}$ iff \[
\probmes_s(\set{\path \in \pointedpaths{\dtmcsymbol}{s} \mid \state{\path}{1} \in \sat{\psi'}{t}}) = \left(\sum\nolimits_{s' \in \sat{\psi'}{t}} \delta(s,s')\right) \sim \lambda.
\]
The value~$\sum_{s'} \delta(s,s')$ can be computed and compared to $\lambda$ in polynomial time, as $\sat{\psi}{t}$ has already been computed by induction hypothesis.

For the remaining temporal operators, we rely on standard automata-theoretic characterizations of the sets of paths satisfying a temporal formula (see, e.g., \cite[Section 1]{gtw} for an introduction to automata on infinite words).
We will then apply the following result due to Baier et al.: Given a \dtmc~$\dtmcsymbol$, one of its states~$s$, and an unambiguous Büchi automaton\footnote{An automaton is unambiguous if it has at most one accepting run on every input.} with $n$ states accepting a language~$L$, the probability~$\probmes_s(L)$ can be computed in polynomial time in $\size{\dtmcsymbol}$ and $n$~\cite[Theorem 2]{DBLP:journals/jcss/BaierK00023}.

We begin by considering the always operator and then deal with the remaining operators, as we can reuse the machinery developed for the always operator to deal with them.
By definition, we have $s \in \sat{\prob{\sim \lambda}{\Boxdot\psi'}}{t}$ iff 
\begin{itemize}
    \item $t = 1111$ and $\probmes_s(\safety(\sat{\psi'}{1111})) \sim \lambda$,
    \item $t = 0111$ and $\probmes_s(\cobuchi(\sat{\psi'}{0111})) \sim \lambda$,
    \item $t = 0011$ and $\probmes_s(\buchi(\sat{\psi'}{0011})) \sim \lambda$, and
    \item $t = 0001$ and $\probmes_s(\reach(\sat{\psi'}{0001})) \sim \lambda$,
\end{itemize}
where 
\begin{itemize}
    \item $\safety(S') = \set{\path \in \allpaths{\dtmcsymbol} \mid \state{\path}{n} \in S' \text{ for all } n\in\nats }$, 
    
    \item $\cobuchi( S') = \set{\path \in \allpaths{\dtmcsymbol} \mid \state{\path}{n} \in S' \text{ for all but finitely many } n\in\nats }$,
        
    \item $\buchi(S') = \set{\path \in \allpaths{\dtmcsymbol} \mid \state{\path}{n} \in S' \text{ for infinitely many } n\in\nats }$,
            
     \item $\reach( S') = \set{\path \in \allpaths{\dtmcsymbol} \mid \state{\path}{n} \in S' \text{ for some } n\in\nats }$.
\end{itemize}
All these sets are accepted by some unambiguous Büchi automaton with at most three states (see Figure~\ref{fig_automata}).
As the satisfiability sets~$\sat{\psi'}{t}$ are already computed by induction assumption, we only need to compute $\probmes_s(L)$ for these languages and compare it to the given threshold~$\lambda$.
This can be achieved in polynomial time as argued above.

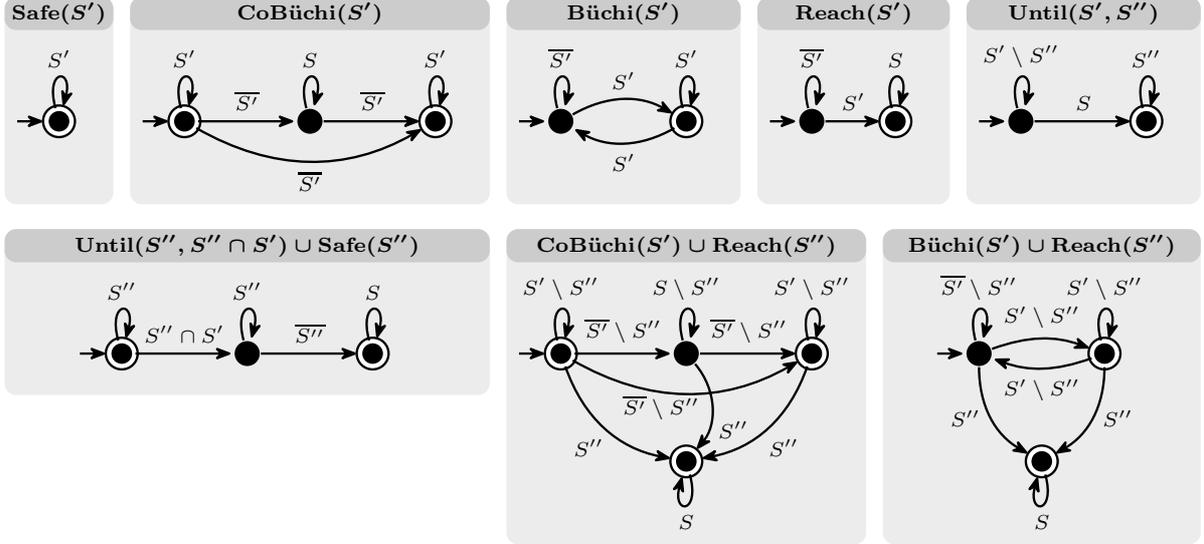
\begin{figure}
    \centering
    \scalebox{1.1}{
\begin{tikzpicture}[thick, shorten >=1pt,node distance=2cm,on grid,>={Stealth[round]},
    st/.style={circle, minimum size = .3cm, inner sep = 0, fill=black},
    ast/.style={circle, double, double distance=1.5pt, minimum size = .3cm, inner sep = 0, draw= black, fill =black}]

\fill[fill=gray!15,rounded corners] (-.65,1.5) rectangle (.65,-1);
\fill[fill=gray!40,rounded corners] (-.65,1.5) rectangle (.65,1.1);
 \node at (0,1.3) {\scriptsize{\boldmath$\safety(S')$}};
  \node[ast]  (a0) at (0,0)                     {};

  \path[->] 
  (-.5,0) edge (a0)
  (a0) edge[loop above]              node []  {\scriptsize$S'$} ()
            ;
            
\fill[fill=gray!15,rounded corners] (.85,1.5) rectangle (5.15,-1);
\fill[fill=gray!40,rounded corners] (.85,1.5) rectangle (5.15,1.1);
\node at (3,1.3) {\scriptsize{\boldmath$\cobuchi(S')$}};
  \node[ast]  (b0) at (1.5,0)                     {};
  \node[st]  (b1) at (3,0)                     {};
  \node[ast]  (b2) at (4.5,0)                     {};

  \path[->] 
  (1,0) edge (b0)
  (b0) edge[loop above]              node []  {\scriptsize$S'$} ()
  (b1) edge[loop above]              node []  {\scriptsize$S$} ()  
  (b2) edge[loop above]              node []  {\scriptsize$S'$} () 
  (b0) edge node[above] {\scriptsize$\overline{S'}$} (b1)
  (b0) edge[bend right] node[below] {\scriptsize$\overline{S'}$} (b2)
  (b1) edge node[above] {\scriptsize$\overline{S'}$} (b2)
;

\fill[fill=gray!15,rounded corners] (5.35,1.5) rectangle (8.15,-1);
\fill[fill=gray!40,rounded corners] (5.35,1.5) rectangle (8.15,1.1);
   \node at (6.75,1.3) {\scriptsize{\boldmath$\buchi(S')$}};
  \node[st]  (c0) at (6,0)                     {};
\node[ast]  (c1) at (7.5,0)                     {};

  \path[->] 
  (5.5,0) edge (c0)
  (c0) edge[loop above]              node []  {\scriptsize$\overline{S'}$} ()
  (c1) edge[loop above]              node []  {\scriptsize$S'$} ()
  (c0) edge[bend left] node[above] {\scriptsize$S'$} (c1)
  (c1) edge[bend left] node[below] {\scriptsize$S'$} (c0)
            ;
\fill[fill=gray!15,rounded corners] (8.35,1.5) rectangle (10.65,-1);
\fill[fill=gray!40,rounded corners] (8.35,1.5) rectangle (10.65,1.1);
\node at (9.5,1.3) {\scriptsize{\boldmath$\reach(S')$}};
  \node[st]  (d0) at (9,0)                     {};
  \node[ast]  (d1) at (10,0)                     {};

  \path[->] 
  (8.5,0) edge (d0)
  (d0) edge[loop above]              node []  {\scriptsize$\overline{S'}$} ()
  (d1) edge[loop above]              node []  {\scriptsize$S$} ()  
  (d0) edge node[above] {\scriptsize$S'$} (d1)
;         

\fill[fill=gray!15,rounded corners] (10.85,1.5) rectangle (13.65,-1);
\fill[fill=gray!40,rounded corners] (10.85,1.5) rectangle (13.65,1.1);
\node at (12.25,1.3) {\scriptsize{\boldmath$\untillang(S',S'')$}};
  \node[st]  (e0) at (11.5,0)                     {};
  \node[ast]  (e1) at (13,0)                     {};

  \path[->] 
  (11,0) edge (e0)
  (e0) edge[loop above]              node []  {\scriptsize$S'\setminus S''$} ()
  (e1) edge[loop above]              node []  {\scriptsize$S''$} ()  
  (e0) edge node[above] {\scriptsize$S$} (e1)
;         

\fill[fill=gray!15,rounded corners] (-.65,-1.3) rectangle (5.15,-3.3);
\fill[fill=gray!40,rounded corners] (-.65,-1.3) rectangle (5.15,-1.7);
\node at (2.25,-1.5) {\scriptsize{\boldmath$\untillang(S'', S''\cap S')\cup \safety(S'')$}};
\node[ast] (f0) at (0.75,-2.8) {};
\node[st] (f1) at (2.25,-2.8) {};
\node[ast] (f2) at (3.75,-2.8) {};

  \path[->] 
  (.25,-2.8) edge (f0)
  (f0) edge[loop above]              node []  {\scriptsize$S''$} ()
  (f1) edge[loop above]              node []  {\scriptsize$S''$} ()
  (f2) edge[loop above]              node []  {\scriptsize$S$} ()
  (f0) edge[] node[above] {\scriptsize$S''\cap S'$} (f1)
  (f1) edge[] node[above] {\scriptsize$\overline{S''}$} (f2)
  ;

\fill[fill=gray!15,rounded corners] (5.35,-1.3) rectangle (9.65,-5.1);
\fill[fill=gray!40,rounded corners] (5.35,-1.3) rectangle (9.65,-1.7);
\node at (7.5,-1.5) {\scriptsize{\boldmath$\cobuchi(S')\cup \reach(S'')$}};
  \node[ast]  (g0) at (6,-2.8)                     {};
  \node[st]  (g1) at (7.5,-2.8)                     {};
  \node[ast]  (g2) at (9,-2.8)                     {};
\node[ast]  (g3) at (7.5,-4.1)                     {};  

  \path[->] 
  (5.5,-2.8) edge (g0)
  (g0) edge[loop above]              node []  {\scriptsize$S' \setminus S''$} ()
  (g1) edge[loop above]              node []  {\scriptsize$S \setminus S''$} ()  
  (g2) edge[loop above]              node []  {\scriptsize$S' \setminus S''$} () 
  (g0) edge node[above] {\scriptsize$\overline{S'} \setminus S''$} (g1)
  (g0) edge[bend right] node[below,xshift=-.3cm,yshift=.1cm] {\scriptsize$\overline{S'} \setminus S''$} (g2)
  (g1) edge node[above] {\scriptsize$\overline{S'} \setminus S''$} (g2)
  (g0) edge[bend right] node[below,xshift=-.2cm] {\scriptsize$S''$} (g3)
  (g1) edge[bend left=40] node[right,near end] {\scriptsize$S''$} (g3)
  (g2) edge[bend left] node[below,xshift=.2cm] {\scriptsize$S''$} (g3)
  (g3) edge[loop below]              node []  {\scriptsize$S$} ()
;

\fill[fill=gray!15,rounded corners] (9.85,-1.3) rectangle (13.65,-5.1);
\fill[fill=gray!40,rounded corners] (9.85,-1.3) rectangle (13.65,-1.7);
\node at (11.75,-1.5) {\scriptsize{\boldmath$\buchi(S')\cup \reach(S'')$}};
    \node[st]  (h0) at (11,-2.8)                     {};
    \node[ast]  (h1) at (12.5,-2.8)                     {};
\node[ast]  (h2) at (11.75,-4.1)                     {};

  \path[->] 
  (10.5,-2.8) edge (h0)
  (h0) edge[loop above]              node []  {\scriptsize$\overline{S'}\setminus S''$} ()
  (h1) edge[loop above]              node []  {\scriptsize$S'\setminus S''$} ()
  (h0) edge[bend left=20] node[above] {\scriptsize$S'\setminus S''$} (h1)
  (h1) edge[bend left=20] node[below] {\scriptsize$S'\setminus S''$} (h0)
  (h0) edge[bend right] node[left] {\scriptsize$S''$} (h2)
  (h1) edge[bend left] node[right] {\scriptsize$S''$} (h2)  
  (h2) edge[loop below]              node []  {\scriptsize$S$} ()
;

  \end{tikzpicture}}

    \caption{The unambiguous Büchi automata for the path properties used in the \rpctl model-checking algorithm. Transitions are labeled by sets of states that represent all states in them (recall that $S$ is the set of all states while $S'$ and $S''$ are subsets of $S$). $\overline{S'}$ denotes the complement of $S'$ w.r.t.\ $S$.}
    \label{fig_automata}
\end{figure}

Now, let us consider the eventually operator. By definition, we have $s \in \sat{\prob{\sim \lambda}{\Diamonddot\psi'}}{t}$ iff $\probmes_s(\reach(\sat{\psi'}{t})) \sim \lambda$, which we have just seen how to check in polynomial time. For the until operator, we have $s \in \sat{\prob{\sim \lambda}{\psi'\Untildot\psi''}}{t}$ iff 
$\probmes_s(\untillang(\sat{\psi'}{t}, \sat{\psi''}{t})) \sim \lambda$, where 
\[
\untillang(S', S'') = \set{\path \in \allpaths{\dtmcsymbol} \mid \text{there is an } n \in \nats \text{ s.t. } \state{\path}{n} \in S'' \text{ and } \state{\path}{n'} \in S' \text{ for all } n'<n }.
\]
There is an unambiguous Büchi automaton with two states accepting this language (see Figure~\ref{fig_automata}). Thus, $\probmes_s(\untillang(\sat{\psi'}{t}, \sat{\psi''}{t})) \sim \lambda$ can again be checked in polynomial time. 

Finally, we consider the release operator. By definition, we have $s \in \sat{\prob{\sim \lambda}{\psi'\Releasedot\psi''}}{t}$ iff 
\begin{itemize}
    \item $t = 1111$ and $\probmes_s([\untillang(\sat{\psi''}{1111},\sat{\psi'}{1111}\cap \sat{\psi''}{1111} )] \cup \safety(\sat{\psi''}{1111})) \sim \lambda$,
    \item $t = 0111$ and $\probmes_s(\cobuchi(\sat{\psi''}{0111} \cup \reach(\sat{\psi'}{0111}))) \sim \lambda$,
    \item $t = 0011$ and $\probmes_s(\buchi(\sat{\psi''}{0011}\cup \reach(\sat{\psi'}{0011})) \sim \lambda$, and
    \item $t = 0001$ and $\probmes_s(\reach(\sat{\psi''}{0001}\cup \reach(\sat{\psi'}{0001})) \sim \lambda$. Note that $\reach(S') \cup \reach(S'') = \reach(S' \cup S'')$ for all sets~$S'$ and $S''$, i.e., we can rely on the results for $\reach$ shown above.
\end{itemize}
Again, all these languages are accepted by unambiguous Büchi automata (see Figure~\ref{fig_automata}) with at most four states, which implies that we can again decide $\probmes_s(L)\sim\lambda$ in polynomial time for these languages~$L$.

Altogether, our algorithm inductively computes $5\size{\cl(\phi)}$ many satisfaction sets, each one in polynomial time (in $\size{\dtmcsymbol}$), and then checks whether $s_\initmark \in \sat{\phi}{t^*}$.
Thus, the algorithm has polynomial running time.
\end{proof}

Again, this result is in line with previous work on robustifying temporal logics: The robustification comes for free (here in terms of computational complexity of the model-checking problem) and the algorithms for the classical semantics can be adapted to handle the robust semantics as well.

\section{Robust PCTL\boldmath$^*$}

In this section, we robustify \pctlstar~\cite{pctlstar}. In line with the general approach, \rpctlstar and \pctlstar share the same syntax (but for the dots), i.e., the formulas of \rpctl are either state formulas or path formulas.
State formulas are given by the grammar
\[
\phi  \Coloneqq p \mid \neg \phi \mid \phi \wedge \phi \mid \phi \vee \phi \mid \phi \rightarrow \phi  \mid \prob{\sim\lambda}{\Phi}
\]
where $p$ ranges over $\ap$, ${\sim} \in \set{<,\le, =, \ge, >}$, $\lambda \in [0,1]$ is a rational probability threshold, and $\Phi$ ranges over path formulas.
Path formulas are given~by 
\[
\Phi \Coloneqq \phi \mid \neg \Phi \mid \Phi \wedge \Phi \mid \Phi \vee \Phi \mid \Phi \rightarrow \Phi  \mid \Nextdot \Phi \mid \Diamonddot \Phi \mid \Boxdot \Phi \mid \Phi \Untildot \Phi \mid \Phi \Releasedot \Phi
\]
where $\phi$ ranges over state formulas.
Formula size is defined as for \rpctl.

Also, \rpctlstar is evaluated over discrete-time Markov chains, just as \pctlstar. Let \dtmc~$\dtmcsymbol = (S, s_\initmark, \delta, \ell)$ be a \dtmc. 
The semantics of \rpctlstar is again defined via an evaluation function~$\reval_\dtmcsymbol$, this time mapping a vertex~$s$ of $\dtmcsymbol$ and a state formula, or a path of $\dtmcsymbol$ and a path formula to a truth value in $\boolfour$.

As \rpctlstar is designed to extend \rpctl, the definition of the \rpctlstar semantics is (in some parts) very similar to that of \rpctl. This is in particular true for the Boolean connectives (both for state and path formulas), which is exactly the same as for \rpctl. However, to easily accommodate the arbitrary nesting of temporal operators in \rpctlstar, we use an alternative definition of the semantics for path formulas, which mimics the original semantics for \rltl~\cite{TabuadaNeider}. This follows the precedent of robust \ctlstar~\cite{rctljournal}, where the semantics of path formulas is derived from the semantics of \rltl as well.

The \rpctlstar evaluation function is defined inductively via
\begin{itemize}
    
    \item $\reval_\dtmcsymbol(s, p) = \begin{cases} 
    1111 & \text{if } p \in \ell(s),\\
    0000 & \text{if } p \notin \ell(s),
    \end{cases}
    $
    
    \item $\reval_\dtmcsymbol(s, \neg \phi) = \begin{cases} 
    1111 & \text{if } \reval_\dtmcsymbol(s, \phi) \prec 1111,\\
    0000 & \text{if } \reval_\dtmcsymbol(s,\phi) = 1111,
    \end{cases}
    $
    
    \item $\reval_\dtmcsymbol(s, \phi_0 \wedge \phi_1) = \min(\reval_\dtmcsymbol(s, \phi_0), \reval_\dtmcsymbol(s, \phi_1))$,
    
    \item $\reval_\dtmcsymbol(s, \phi_0 \vee \phi_1) = \max(\reval_\dtmcsymbol(s, \phi_0), \reval_\dtmcsymbol(s, \phi_1))$,
    
    \item $\reval_\dtmcsymbol(s, \phi_0 \rightarrow \phi_1) = \begin{cases} 
    1111 & \text{if } \reval_\dtmcsymbol(s, \phi_0) \preceq \reval_\dtmcsymbol(s, \phi_1),\\
    \reval_\dtmcsymbol(s, \phi_1) & \text{if } \reval_\dtmcsymbol(s, \phi_0) \succ \reval_\dtmcsymbol(s, \phi_1),
    \end{cases}
    $
    
    \item $\reval_\dtmcsymbol(s, \prob{\sim\lambda}{\Phi}) = \max \set{t \in \boolfour \mid \probmes_s(\set{\pi \in\pointedpaths{\dtmcsymbol}{s} \mid \reval_\dtmcsymbol(\pi, \Phi) \succeq t}) \sim \lambda}$ with the convention~$\max \emptyset = 0000$,

    \item $\reval_\dtmcsymbol(\pi, \phi) = \reval_\dtmcsymbol(\state{\pi}{0},\phi)$,
    
        \item $\reval_\dtmcsymbol(\pi, \neg \Phi) = \begin{cases} 
    1111 & \text{if } \reval_\dtmcsymbol(\pi, \Phi) \prec 1111,\\
    0000 & \text{if } \reval_\dtmcsymbol(\pi,\Phi) = 1111,
    \end{cases}
    $

    \item $\reval_\dtmcsymbol(\pi, \Phi_0 \wedge \Phi_1) = \min(\reval_\dtmcsymbol(\pi, \Phi_0), \reval_\dtmcsymbol(\pi, \Phi_1))$,
    
    \item $\reval_\dtmcsymbol(\pi, \Phi_0 \vee \Phi_1) = \max(\reval_\dtmcsymbol(\pi, \Phi_0), \reval_\dtmcsymbol(\pi, \Phi_1))$,
    
    \item $\reval_\dtmcsymbol(\pi, \Phi_0 \rightarrow \Phi_1) = \begin{cases} 
    1111 & \text{if } \reval_\dtmcsymbol(\pi, \Phi_0) \preceq \reval_\dtmcsymbol(\pi, \Phi_1),\\
    \reval_\dtmcsymbol(s, \Phi_1) & \text{if } \reval_\dtmcsymbol(\pi, \Phi_0) \succ \reval_\dtmcsymbol(\pi, \Phi_1),
    \end{cases}
    $
    
    \item $\reval_\dtmcsymbol(\pi, \Nextdot\Phi) = \reval_\dtmcsymbol(\suffix{\pi}{1},\Phi)$,
    
    \item $\reval_\dtmcsymbol(\pi, \Diamonddot\Phi) = b_1 b_2 b_3 b_4$ with $b_k = \max_{n \ge 0} (\reval_\dtmcsymbol(\suffix{\pi}{n},\Phi))[k]$ for all $k\in\set{1,2,3,4}$, 
    
    \item $\reval_\dtmcsymbol(\pi, \Boxdot \Phi) = b_1 b_2 b_3 b_4$ with
    \begin{itemize}
        \item $b_1 = \min_{n \ge 0}( \reval_\dtmcsymbol(\suffix{\pi}{n},\Phi))[1]$,
        \item $b_2 = \max_{m \ge 0}( \min_{n \ge m} \reval_\dtmcsymbol(\suffix{\pi}{n},\Phi))[2]$,
        \item $b_3 = \min_{m \ge 0}( \max_{n \ge m} \reval_\dtmcsymbol(\suffix{\pi}{n},\Phi))[3]$, and
        \item $b_4 = \max_{n \ge 0}( \reval_\dtmcsymbol(\suffix{\pi}{n},\Phi))[4]$,
    \end{itemize}
    
\item $\reval_\dtmcsymbol(\pi, \Phi\Untildot \Psi) = b_1 b_2 b_3 b_4$ with

$b_k = \max_{n \ge 0} \min\set{
(\reval_\dtmcsymbol(\suffix{\pi}{n},\Psi))[k], \min\set{ (\reval_\dtmcsymbol(\suffix{\pi}{n'},\Phi))[k] \mid 0 \le n' < n}
} $ for all $k\in\set{1,2,3,4}$, and

\item $\reval_\dtmcsymbol(\pi, \Phi\Releasedot \Psi) = b_1 b_2 b_3 b_4$ with 
\begin{itemize}
    \item $b_1 = \min_{n' \ge 0} \max \set{ (\reval_\dtmcsymbol(\suffix{\pi}{n'}, \Psi))[1]  , \max_{n'' < n'}(\reval_\dtmcsymbol(\suffix{\pi}{n''}, \Phi))[1] }$,
    \item $b_2 = \max_{n \ge 0} \min_{n' \ge n} \max \set{ (\reval_\dtmcsymbol(\suffix{\pi}{n'}, \Psi))[2]  , \max_{n'' < n'}(\reval_\dtmcsymbol(\suffix{\pi}{n''}, \Phi))[2] }$,
    \item $b_3 = \min_{n \ge 0} \max_{n' \ge n} \max \set{ (\reval_\dtmcsymbol(\suffix{\pi}{n'}, \Psi))[3]  , \max_{n'' < n'}(\reval_\dtmcsymbol(\suffix{\pi}{n''}, \Phi))[3] }$, and
    \item $b_4 = \max_{n' \ge 0} \max \set{ (\reval_\dtmcsymbol(\suffix{\pi}{n'}, \Psi))[4]  , \max_{n'' < n'}(\reval_\dtmcsymbol(\suffix{\pi}{n''}, \Phi))[4] }$.
\end{itemize}
\end{itemize}

Note that while the definition of the semantics of the temporal operators differs from the one for \rpctl (to easily accomodate arbitrary nesting of temporal operators which is not possible in \rpctl), \rpctl is a fragment of \rpctlstar.

\begin{example}
Consider the formula~$\prob{\ge .9}{\Boxdot a \rightarrow \Boxdot g}$, a variant of the assume-guarantee property of Example~\ref{example:rpctl}.
It evaluates to the largest truth value~$t$ such that $\Boxdot a \rightarrow \Boxdot g$ evaluates to $t$ with probability $\ge .9$.
Now, on a single path, $\Boxdot a \rightarrow \Boxdot g$ evaluates to
\begin{itemize}
    \item $1111$ if $\Boxdot g$ evaluates to a larger or equal truth value than $\Boxdot a$ and 
    \item to $t \prec 1111$ if $\Boxdot a$ evaluates to $t$ and $\Boxdot g$ evaluates to a truth value larger than $t$.
\end{itemize}
\end{example}

\subsection{Expressiveness}

As usual for temporal logics that allow arbitrary nesting of temporal operators (e.g., \ltl, \ctlstar, and \atlstar) \rpctlstar has the same expressiveness as its non-robust version: the first bit of the five-valued semantics of \rpctlstar again captures the semantics of non-robust \pctlstar (as per design goals), thereby yielding the first embedding, while arbitrary nesting of temporal operators allows to mimic the five-valued semantics of \rpctlstar explicitly in non-robust \pctlstar, there yielding the second embedding.

\begin{theorem}
\label{thm:rpctlstarexpressiveness}
\rpctlstar is as expressive as \pctlstar. Both translations can be computed in polynomial time.
\end{theorem}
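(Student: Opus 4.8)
The plan is to exhibit two polynomial-time translations, one in each direction, relating the five-valued semantics of \rpctlstar to the classical two-valued semantics of \pctlstardiamondbox.

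\textbf{From \pctlstardiamondbox to \rpctlstar.} This direction mirrors Lemma~\ref{lemma:pctl2rpctl}. Given an implication-free \pctlstardiamondbox formula $\phi$ (implications are first removed via $\Phi_0\rightarrow\Phi_1\equiv\neg\Phi_0\vee\Phi_1$, on both the state and the path level), let $\dot{\phi}$ be obtained by dotting every temporal operator. I would prove by simultaneous induction over state and path formulas that $\dtmcsymbol,s\models\phi$ iff $\reval_\dtmcsymbol(s,\dot{\phi})=1111$ and that $\pi\models\Phi$ iff $\reval_\dtmcsymbol(\pi,\dot{\Phi})=1111$. Only the temporal and probabilistic operators are nontrivial, and each follows by unfolding the corresponding clause and observing that the first bit of the robust value reproduces the classical semantics (e.g.\ $b_1$ of $\Boxdot$ is $\min_n\reval_\dtmcsymbol(\suffix{\pi}{n},\Phi)[1]$, matching $\Box$). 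Dotting is linear, so this translation is polynomial.

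\textbf{From \rpctlstar to \pctlstardiamondbox.} This is the substantial direction. The guiding observation is that every truth value in $\boolfour$ is a monotone bit-string $0^{4-i}1^{i}$, so $\reval_\dtmcsymbol(\pi,\Phi)\succeq t$ holds iff a single \emph{critical bit} is set, namely the bit at the position $k(t)\in\set{1,2,3,4}$ of the leading one of $t$. I would construct by simultaneous induction (i) for each path formula $\Phi$ and each $k$ a classical path formula $\langle\Phi\rangle_k$ with $\reval_\dtmcsymbol(\pi,\Phi)[k]=1$ iff $\pi\models\langle\Phi\rangle_k$, and (ii) for each state formula $\phi$ and each $t\in\boolfour\setminus\set{0000}$ a classical state formula $\langle\phi\rangle_t$ with $\reval_\dtmcsymbol(s,\phi)\succeq t$ iff $\dtmcsymbol,s\models\langle\phi\rangle_t$; the two inductions are linked through the critical bit (a state subformula inside a path formula is handled at bit $k$ by its state translation at the threshold whose leading one is at position $k$). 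The temporal clauses translate directly: $\langle\Nextdot\Phi\rangle_k=\Next\langle\Phi\rangle_k$, $\langle\Diamonddot\Phi\rangle_k=\Diamond\langle\Phi\rangle_k$, and the four bits of $\Boxdot$ yield the four canonical degrees $\langle\Boxdot\Phi\rangle_1=\Box\langle\Phi\rangle_1$, $\langle\Boxdot\Phi\rangle_2=\Diamond\Box\langle\Phi\rangle_2$, $\langle\Boxdot\Phi\rangle_3=\Box\Diamond\langle\Phi\rangle_3$, $\langle\Boxdot\Phi\rangle_4=\Diamond\langle\Phi\rangle_4$, exactly matching the $\min/\max$ definitions of $b_1,\dots,b_4$. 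Conjunction and disjunction distribute bitwise because the bitwise minimum (maximum) of monotone bit-strings agrees with conjunction (disjunction) on every bit, and negation collapses to the first bit, $\langle\neg\Phi\rangle_k=\neg\langle\Phi\rangle_1$. Implication is the one Boolean case that is not pointwise: one checks $\reval_\dtmcsymbol(\pi,\Phi_0\rightarrow\Phi_1)[k]=1$ iff $\reval_\dtmcsymbol(\pi,\Phi_0)\preceq\reval_\dtmcsymbol(\pi,\Phi_1)$ or $\reval_\dtmcsymbol(\pi,\Phi_1)[k]=1$, giving $\langle\Phi_0\rightarrow\Phi_1\rangle_k=\big(\bigwedge_{j=1}^{4}(\langle\Phi_0\rangle_j\rightarrow\langle\Phi_1\rangle_j)\big)\vee\langle\Phi_1\rangle_k$; I expect the tempting shortcut $\bigwedge_{j\le k}(\langle\Phi_0\rangle_j\rightarrow\langle\Phi_1\rangle_j)$ to be incorrect.

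The genuinely new case, absent in the nonprobabilistic rLTL setting, is the probabilistic operator, and I expect it to be the main obstacle. By the critical-bit observation $\set{\pi\mid\reval_\dtmcsymbol(\pi,\Phi)\succeq b}=\set{\pi\mid\pi\models\langle\Phi\rangle_{k(b)}}$, so the mass $\probmes_s$ of this event is precisely what a classical $\prob{\sim\lambda}{\langle\Phi\rangle_{k(b)}}$ reasons about. Unfolding $\reval_\dtmcsymbol(s,\prob{\sim\lambda}{\Phi})=\max\set{b\mid\probmes_s(\set{\pi\mid\reval_\dtmcsymbol(\pi,\Phi)\succeq b})\sim\lambda}$ into the query $\reval_\dtmcsymbol(s,\cdot)\succeq t$ then uses that the mass is nonincreasing as $b$ grows along the chain: for ${\sim}\in\set{\ge,>}$ this reduces to $\dtmcsymbol,s\models\prob{\sim\lambda}{\langle\Phi\rangle_{k(t)}}$; for ${\sim}\in\set{\le,<}$ monotonicity forces the value into $\set{1111,0000}$ and the query collapses to the first bit; and for equality ($\sim$ being $=$) it is captured by the disjunction $\bigvee_{b\succeq t}\prob{=\lambda}{\langle\Phi\rangle_{k(b)}}$. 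Here the convention $\max\emptyset=0000$ and the non-monotonicity for equality must be tracked carefully. Finally, to obtain polynomial-time translations I would represent the output as a shared DAG: each of the $4\,\size{\cl(\phi)}$ translated subformulas is a constant-size combination of previously defined ones, so the whole construction stays polynomial, matching the claimed bound.
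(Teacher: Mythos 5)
Your proposal is correct and follows essentially the same route as the paper: the easy direction is the dotting translation of Lemma~\ref{lemma:pctl2rpctl} extended to path formulas, and the hard direction is an inductive, truth-value-indexed translation with exactly the same clauses for the temporal operators ($\Box\Phi_{1111}$, $\Diamond\Box\Phi_{0111}$, $\Box\Diamond\Phi_{0011}$, $\Diamond\Phi_{0001}$, etc.) and for implication, your bit index $k(t)$ being in bijection with the truth values the paper indexes by. If anything, you are more careful than the paper's stated clauses on two points --- reducing negation to the first bit rather than to bit $k$, and splitting the probabilistic operator by the comparison $\sim$ (the uniform clause $(\prob{\sim\lambda}{\Phi})_t = \prob{\sim\lambda}{\Phi_t}$ is only sound for the monotone comparisons $\ge$ and $>$) --- and your DAG/sharing remark is what actually secures the polynomial-size claim in the presence of nested implications.
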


\begin{proof}
The translation from \pctlstar to \rpctlstar is a generalization of the analogous result for \pctl and \rpctl (see Lemma~\ref{lemma:pctl2rpctl}): 
Let $\phi$ be a \pctlstar state formula without implications, and let $\dtmcsymbol$ be a \dtmc with initial state~$s_\initmark$. Then, $\dtmcsymbol,s_\initmark \models \phi$ iff $\reval_\dtmcsymbol(s_\initmark, \dot{\phi}) = 1111$, where $\dot{\phi}$ is the \rpctlstar state formula obtained from $\phi$ by dotting all temporal operators. 
This  is again proven by induction over the construction of $\phi$.

For the other direction, we inductively translate an \rpctlstar state formula~$\phi$ and a truth value~$t \in\boolfour$ into a \pctlstar state formula~$\phi_t$ such that $\reval_\dtmcsymbol(s,\phi) \succeq t$ iff $\dtmcsymbol,s\models \phi_t$ for all {\dtmc}s~$\dtmcsymbol$ and all states $s$ of $\dtmcsymbol$.
This is in line with previous work on \ltl~\cite{TabuadaNeider}, \ctlstar~\cite{rctljournal}, and \atlstar~\cite{MNZ23}.

As we have $\reval_\dtmcsymbol(s,\phi) \succeq 0000$ for all state formulas~$\phi$, we define $\phi_{0000}$ to be some tautology (say $p \vee \neg p$) and only consider $t \succ 0000$ in the following. We start with atomic propositions and define $p_t = p$.
The translation for Boolean connectives is the same for state and path formulas.
So, to avoid duplication, $\chi$ ranges in the following over state and path formulas.
\begin{itemize}    
    \item $(\neg \chi)_t = \neg \chi_t$,
    \item $(\chi_1 \vee \chi_2)_t = (\chi_1)_t \vee (\chi_2)_t$ and  $(\chi_1 \wedge \chi_2)_t = (\chi_1)_t \wedge (\chi_2)_t$,
    \item $(\chi_1 \rightarrow \chi_2)_{1111} = \bigwedge_{t \succeq 0000} (\chi_2)_t \vee \neg (\chi_1)_t$, and 
    \item $(\chi_1 \rightarrow \chi_2)_{t} = (\chi_1 \rightarrow \chi_2)_{1111} \vee (\chi_2)_t$ for $t \prec 1111$.
\end{itemize}
Next, we define $(\prob{\sim\lambda}{\Phi})_t = \prob{\sim\lambda}{\Phi_t}$ and consider the temporal operators:
\begin{itemize}
    \item $(\Nextdot\Phi)_t = \Next\Phi_t$ and $(\Diamonddot\Phi)_t = \Diamond\Phi_t$,
    \item $(\Boxdot\Phi)_{1111} = \Box\Phi_{1111}$,  $(\Boxdot\Phi)_{0111} = \Diamond\Box\Phi_{0111}$,  $(\Boxdot\Phi)_{0011} = \Box\Diamond\Phi_{0011}$, and $(\Boxdot\Phi)_{0001} = \Diamond\Phi_{0001}$,
    \item $(\Phi\Untildot\Psi)_t = \Phi_t\Until\Psi_t$,
    \item $(\Phi \Releasedot \Psi)_{1111} = \Phi_{1111} \Releasedot \Psi_{1111}$,
     $(\Phi \Releasedot \Psi)_{0111} = \Diamond\Box\Psi_{0111} \vee \Diamond \Phi_{1111}$,
    and $(\Phi \Releasedot \Psi)_{0011} = \Box\Diamond\Psi_{0011} \vee \Diamond \Phi_{1111}$, and
     $(\Phi \Releasedot \Psi)_{0001} = \Diamond\Psi_{0001} \vee \Diamond \Phi_{1111}$.
\end{itemize}
An induction over the construction of $\phi$ shows that $\phi_t$ has the desired properties.
\end{proof}

\subsection{Model-checking}

The model-checking problem for \rpctlstar is defined as for \rpctl: Given a \dtmc~$\dtmcsymbol$ with initial state~$s_\initmark$, an \rpctlstar state formula~$\phi$, and a truth value~$t^* \in \boolfour$, is $\reval_\dtmcsymbol(s, \phi)\succeq t^*$?
It is \pspace-complete, as is the \pctlstar model-checking problem~\cite{pctlstar,VardiWolper86}, i.e., robustness comes again for free.
This results follows directly from the fact that \rpctlstar can be (in polynomial time) translated into \pctlstar and follows previous results on robust \ctlstar~\cite{rctljournal} and robust \atlstar~\cite{MNZ23}.

\begin{theorem}
\rpctlstar model-checking is \pspace-complete.
\end{theorem}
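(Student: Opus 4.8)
The plan is to establish the two bounds separately, relying on the expressiveness equivalence of Theorem~\ref{thm:rpctlstarexpressiveness} together with the known \pspace-completeness of \pctlstar model-checking~\cite{pctlstar,VardiWolper86}.

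For membership in \pspace, I would reduce \rpctlstar model-checking to \pctlstar model-checking: given~$(\dtmcsymbol,s_\initmark,\phi,t^*)$, use Theorem~\ref{thm:rpctlstarexpressiveness} to compute the \pctlstar state formula~$\phi_{t^*}$ with $\reval_\dtmcsymbol(s_\initmark,\phi)\succeq t^*$ iff $\dtmcsymbol,s_\initmark\models\phi_{t^*}$, and then run the \pspace algorithm for \pctlstar. The one subtlety is that the translation spawns the five variants~$\psi_t$ of each subformula~$\psi$, so it has to be kept in a shared (circuit) representation; then $\phi_{t^*}$ has only polynomially many distinct subformulas, which is the parameter that controls the cost of \pctlstar model-checking, so the composition stays in \pspace. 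Equivalently, one can avoid the syntactic translation and work directly: process $\cl(\phi)$ bottom-up, and for a subformula~$\prob{\sim\lambda}{\Phi}$ compute, for each $b\in\boolfour$, the measure~$\probmes_s(\set{\pi\mid\reval_\dtmcsymbol(\pi,\Phi)\succeq b})$, whose underlying path set is $\omega$-regular by the threshold-$b$ reading of the path-formula translation and whose measure is computable in \pspace as in probabilistic \ltl model-checking.

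For \pspace-hardness I would use the dotting embedding (first direction of Theorem~\ref{thm:rpctlstarexpressiveness}): for an implication-free \pctlstardiamondbox state formula~$\phi$ we have $\dtmcsymbol,s_\initmark\models\phi$ iff $\reval_\dtmcsymbol(s_\initmark,\dot\phi)\succeq 1111$, so dotting is a polynomial-time reduction from \pctlstardiamondbox model-checking to \rpctlstar model-checking with target~$t^*=1111$. It then remains to show that \pctlstardiamondbox model-checking is itself \pspace-hard (alternatively, one inherits hardness from full \pctlstar with until by applying the same dotting embedding to the full logic).

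This last point is the main obstacle, since the standard \pspace-hardness proof for \pctlstar uses the until operator, whereas here only next, eventually, and always are available. I would prove it by a direct reduction from acceptance of a polynomial-space Turing machine~$M$ on input~$x$. The \dtmc cycles deterministically through the~$n$ tape-cell positions and at each step draws a symbol uniformly, so that its runs enumerate candidate configuration sequences using only polynomially many states; from every state it may additionally branch, with positive probability, into an absorbing \emph{done}-sink. The path formula asserts the initial configuration~$x$, transition-consistency guarded by the sink, namely $\Box(\mathit{done}\vee\mathit{valid})$ where $\mathit{valid}$ compares a cell with its counterpart~$n$ positions ahead using $n$ nested next-operators, and $\Diamond\,\mathit{accept}$; testing $\prob{>0}{\cdot}$ then captures the existence of a \emph{finite} valid accepting prefix, which has positive probability exactly when $M$ accepts~$x$. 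Because this guarded-$\Box$-plus-sink gadget replaces the until operator and keeps the formula within \pctlstardiamondbox (implication being expanded into~$\neg,\vee$), the reduction goes through; together with membership it yields \pspace-completeness.
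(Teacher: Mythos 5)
Your proposal is correct and follows the same overall route as the paper, whose entire proof is the one-liner \myquot{follows immediately from Theorem~\ref{thm:rpctlstarexpressiveness} and the \pspace-completeness of \pctlstar model-checking}. What you add is a more honest accounting of two points the paper leaves implicit. First, the size of the translation $\phi \mapsto \phi_{t^*}$: the clauses for implication reference all five variants $(\phi_i)_t$, so the translated formula is only polynomial when measured by the number of \emph{distinct} subformulas (a DAG/shared representation), not as a syntax tree. Your observation that this is the parameter governing the cost of \pctlstar model-checking is exactly what makes the composition work; note the paper quietly builds this in by \emph{defining} formula size as the number of subformulas. Second, hardness: the paper's cited \pspace-hardness is for full \pctlstar with until, whereas the dotting embedding as stated only reduces from \pctlstardiamondbox, so one must either invoke the paper's remark that all results extend to the full logics with until and release (which makes the dotting reduction from full \pctlstar available and settles hardness immediately), or prove \pspace-hardness of the until-free fragment directly. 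Your Turing-machine reduction with the absorbing \emph{done}-sink replacing until is the standard way to do the latter (in the spirit of Sistla--Clarke hardness for LTL with only next and eventually); it is plausible as sketched, though you should make sure the $n$-fold nested-next consistency check is vacuously satisfied once the lookahead window overlaps the sink. In short: same decomposition as the paper, with the two genuine gaps in the paper's one-line argument identified and closed.
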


\begin{proof}
The result follows immediately from Theorem~\ref{thm:rpctlstarexpressiveness} and the \pspace-completeness of \pctlstar model-checking.
\end{proof}

\section{Related Work}

There is a plethora of work on the verification of probabilistic systems and on robustifying verification. 
Due to space restrictions, we focus here on the intersection of these two areas, which is our concern in this work.

A major challenge in the modelling of probabilistic systems is the fact that determining exact transition probabilities is often impossible. Instead one resorts to statistical analyses of the system, which comes with uncertainties.\footnote{In fact, even the work introducing \pctlstar considered models with unknown transition probabilities~\cite{pctlstar}.} 
However, verification results are often highly sensitive to changes in the transition probabilities, i.e., modelling and verification are not robust to those changes.
Hence, a large body of work is concerned with capturing uncertainty in probabilistic systems and their subsequent verification. 

Various types of uncertain transition functions for Markov chains have been introduced, e.g., interval bounded {\dtmc}s~\cite{senetal} where only upper and lower bounds on the transition probabilities are specified, and convex MDPs, Markov decision processes with convex uncertainties~\cite{puggellietal}, and robust MDPs with rectangular ambiguity sets~\cite{Iyengar,Nilimetal}.
However, there are uncertainties beyond the transition probabilities, e.g., in the form of partial observability and adversarial behaviour. A recent position paper by Badings et al.~\cite{badingsetal} gives a thorough overview of the state-of-the-art in decision making under uncertainty, presenting a survey of uncertainty models that enable more robust modelling and verification.
Finally, other approaches to handling uncertainty include simulation~\cite{ashoketal,wiesemannetal} and approximation~\cite{jaegeretal}.

\section{Conclusion}

We have shown how to robustify \pctl and \pctlstar, obtaining the logics \rpctl and \rpctlstar. 
The model-checking problems for these robust logics are as hard as the model-checking problems for the non-robust variants, i.e., robustness can be added for free. 
This is in line with previous work on robust variants of \ltl~\cite{DBLP:journals/tocl/AnevlavisPNT22} and its extensions~\cite{DBLP:journals/iandc/NeiderWZ22}, as well as \ctl and \ctlstar~\cite{rctljournal}, and \atl and \atlstar~\cite{MNZ23}. 

Probably the most interesting problem left for future work concerns the expressiveness of \rpctl and \pctl.
Note that in the non-probabilistic setting, it is known that \rctl is strictly more expressive than \ctl~\cite[Section 3.3]{rctljournal}.
However, as discussed in Subsection~\ref{subsec_exp}, it is unclear whether this separation can be lifted to the probabilistic setting.

\paragraph{Acknowledgements} We want to thank Marco Muñiz for proposing to study \rpctl and \rpctlstar and for many fruitful discussions, as well as the reviewers for their valuable feedback. 

This work was supported by DIREC – Digital Research Centre Denmark.

\bibliographystyle{plain}

\bibliography{biblio.bib}

\end{document}